\pgfplotsset{compat=1.17} 
\tikzset{>=latex}
\newcounter{protocol}
\newtheorem{theorem}{Theorem}
\newenvironment{protocol}[1][htb]{%
  \let\c@algorithm\c@protocol
  \renewcommand{\ALG@name}{Protocol}
  \begin{algorithm}[#1]%
  }{\end{algorithm}
}
\newcommand{\enquote}[1]{``#1''}
\author{
\IEEEauthorblockN{Ankit Khandelwal, Stephen DiAdamo \vspace{-8mm}}
\thanks{© 2023 IEEE. Personal use of this material is permitted. Permission from
IEEE must be obtained for all other uses, in any current or future media,
including reprinting/republishing this material for advertising or promotional
purposes, creating new collective works, for resale or redistribution to servers
or lists, or reuse of any copyrighted component of this work in other works.}
}
\title{Enhancing Protocol Privacy with Blind Calibration of Quantum Devices}
\begin{document}

\maketitle

\begin{abstract}

To mitigate the noise in quantum channels, calibration is used to tune the devices to minimize error. Generally, calibration is performed by transmitting pre-agreed-upon calibration states and determining an error cost so the two parties can tune their devices accordingly. The calibration states can be the same ones used for the desired protocol, and so an untrusted party could potentially learn which protocol is being performed by gathering knowledge of the calibration states and cost function. Here, we assume privacy of the protocol is the goal and therefore the receiver should not be allowed to determine the protocol states. We limit the information that is revealed to the receiver, and in this regard, we propose a simple protocol that hides the calibration states and cost function from the receiver, but still allows for calibration to be performed efficiently, thereby increasing the privacy of the protocol. We show various numerical results demonstrating the ability of the protocol under various channel noise parameters and communication scenarios.

\end{abstract}

\vspace{-3mm}
\section{Introduction}

Quantum network technologies are rapidly developing and the potential applications are becoming more realistic for commercial deployment. A key benefit quantum networks bring is an unreachable level of privacy and security that classical networks cannot currently achieve, with protocols such as quantum key distribution \cite{pirandola2020advances}, secret sharing \cite{hillery1999quantum}, and blind quantum computing \cite{fitzsimons2017private}. A similarity that each of these protocols share is that, before the protocols begins, and periodically during a continuous execution, the communicating parties perform device calibration to ensure their devices are aligned, minimizing any noise-induced errors caused by the quantum channel. Optical components are affected by environmental changes like temperature fluctuations~\cite{yeung1978effect}, and so it is necessary to regularly calibrate the devices to mitigate these effects.

By performing calibration, it opens the network up to potential attack. For example, in the security proofs of some QKD protocols, it is assumed that the devices begin in a calibrated state, but it has been shown that there are attacks that can be performed during the calibration protocol \cite{jain2011device, fei2018quantum}, thereby diminishing the security. Further, in QKD it is critical that an accurate estimate for the Quantum Bit Error Rate (QBER) is known so that information reconciliation can be accurately and efficiently performed \cite{elkouss2010information}, and so calibration is essential. It is therefore important for the calibration stage to be as secure as the protocols themselves, not only for QKD, but in all quantum secure communication scenarios. 

An example approach to device calibration for quantum communication is to align the transmitter and receiver using polarized states as described in \cite{xavier2009experimental}. Here the sender and receiver tune a polarization controller to calibrate, where the sender and receiver both know the calibration states. The process repeats until the two parties minimize the error rate. In this work, we propose a calibration strategy that uses the same quantum signals that will be used in the protocol to be executed, allowing calibration, but done in a way such that the receiver cannot determine the states, thereby protecting the secrecy of the protocol. 

In this work, we propose a simple approach to overcoming the need for the sender to reveal the calibration states and the cost function used for calibration. We call this approach \enquote{blind calibration} because from the perspective of the receiver, they are blind to which quantum states they are receiving and why the tuning of their device is improving transmission cost. As far as we know, this is the first proposal for blind calibration used for quantum devices. In sensor networks \cite{balzano2007blind} blind calibration was proposed as a calibration method without accurate ground-truth, and so a different meaning of ``blind''. The goal here is to hide information intentionally from the receiver. The approach here is to randomize the transmission order of the calibration states, thereby preventing the receiver from performing quantum state tomography to determine the state set. Moreover, the sender hides the cost function by computing it at the their side of the channel, sending only a---potentially  obfuscated---value back to the receiver. The sender's goal is therefore to complete the calibration as efficiently as possible, while hiding as much information as possible. In this work, we propose a communication protocol that achieves this goal and we analyze various transmission scenarios for proof-of-concept. With simulation, we test how well our approach works over fiber channel models with various noise parameters. To gain advantage using our protocol, although, calibration should be done infrequently enough to outweigh its cost since our protocol is potentially more resource intensive. 

Throughout this work we use the following mathematical notation: the set of quantum states on a finite dimensional, complex Hilbert space  $\mathcal{H}$ of dimension $n$ is denoted $\mathcal{S}(\mathcal{H})$; a quantum channel is a Completely Positive Trace Preserving (CPTP) map over the state space $\mathcal{S}(\mathcal{H})$; random variables are usually denoted as $X_i$ and vectors of random variables with a bold-face notation $\mathbf{X}_i$.

\section{Blind Calibration over Quantum Channels}\label{sec:protocol}

In this section, we introduce our protocol for blind calibration. In order to specify the protocol, we need to firstly state the assumptions we make. Similar to the main assumptions for many quantum cryptographic protocols, we assume the following: 1) Both parties act to achieve the goal of the protocol and do not act to prevent it from being fulfilled; 2) The sender has a trusted and private random number generator; 3) The sender operates in a secure location and particularly their quantum source has no side channels leaking information. Further, we rely on the fact that quantum states cannot be perfectly copied, that is, the no-cloning theorem.

With this, we state our protocol for blind calibration. In Protocol~\ref{proto:blind-calibration}, we list the the protocol instructions for the sender and receiver. The high level idea of the protocol is that the sender obfuscates the transmission statistics to prevent the receiver from performing tomography \cite{nielsen_chuang_2010}, but at the same time, using efficient optimization algorithms to find the tuning parameters that minimize the desired cost function. To do this, for each iteration of the protocol, the sender uses the uniform distribution $p$ over a set of calibration states $S$ to determine the order to transmit $N$ quantum states. From there, $N$ transmissions are made to the receiver after being processed by an encoder, and the receiver then performs a decoding operation and measures in a particular basis. The measurement results are sent back to the sender with the basis information, and a cost of transmission is determined by the sender. The cost is used to update the encoding strategy and is further sent to the receiver to update the decoding strategy. An updating process tunes the transmission and detection devices to perform calibration. This process repeats until convergence or a maximum number of iterations $I_{\max}$ is reached. In this sense, any eavesdropper intercepting the quantum states in transmission cannot perform a tomography attack to determine which quantum states are being transmitted. 

We further justify why tomography at the receiver alone cannot be performed on the transmitted states to uncover the calibration state set $S$. The original algorithm for tomography is as follows. For a state $\rho \in \mathcal{S(H)}$ from a Hilbert space $\mathcal{H}$ of $n$ dimensions, we have the following equation,
\begin{equation}\label{eq:tomo_multi}
    \rho=\frac{1}{n}\sum_{\Vec{v}}\Tr(\sigma_{v_1}\otimes\dots\otimes\sigma_{v_n}\rho)\sigma_{v_1}\otimes\dots\otimes\sigma_{v_n}
\end{equation}

Here, the vectors $\Vec{v}=(v_1,\dots,v_n)$ with entries $v_i$ are taken from the set $\{I,X,Y,Z\}$ of Pauli matrices. 
In order to reconstruct a density matrix with tomography, one needs to make many measurements in the different basis such that the state statistics converge accordingly. Usually, it is assumed that the state on which tomography is being performed is the same for each measurement. In our protocol, we ensure that the receiver is unable to determine when, from a set of calibration states, a specific quantum state was transmitted because each state is transmitted uniformly randomly. The best the receiver can do is compute an average state matrix, specifically a maximally mixed state, which cannot reveal state information of the parts. We prove the claim formally.

\begin{figure}
\begin{protocol}[H]
\raggedright
\caption{\raggedright{Blind Calibration}}
\small{
\textbf{Parameters}
\begin{compactitem}
    \item $S \subseteq \mathcal{S}(\mathcal{H})$, a set of calibration states, $|S|>0$
    \item $N\in\mathbbm{N}$, the number of transmissions per iteration
    \item $Cost:\mathbbm{R}^N\times\mathbbm{N}^N\rightarrow\mathbbm{R}$, a cost function
    \item $\epsilon_{\text{th}} \geq 0$, a calibration convergence threshold
    \item $I_{\text{max}}$, the maximum number of iterations
    \item Parameterized Encoder $E_\theta:\mathcal{S}(\mathcal{H}_A)\rightarrow\mathcal{S}(\mathcal{H}'_A)$
    \item Quantum Channel $C:\mathcal{S}(\mathcal{H}'_A)\rightarrow\mathcal{S}(\mathcal{H}'_B)$
    \item Parameterized Decoder $D_\phi:\mathcal{S}(\mathcal{H}'_B)\rightarrow\mathcal{S}(\mathcal{H}_B)$
    \item $Update:\mathbbm{R}^M\times \mathbbm{R}\rightarrow\mathbbm{R}^M$, a parameter update function
\end{compactitem}}
\textbf{Sender}
\begin{algorithmic}[1]
     \State $i \gets 0$; $\theta^i\gets$ Initialize
     \While{$i<I_\text{max}$ and threshold $\epsilon_{\text{th}}$ not reached}
        \State $statesTransmitted \gets [ \hspace{1mm} ]$
        \For{$t \in \{0,..., N\}$}
        \State \textbf{choose} State $
        \rho \in S$ with uniform probability
        \State \textbf{append} Index of $
        \rho$ to $statesTransmitted$ 
        \State \textbf{transmit} $E_{\theta^i}(\rho)$ over channel $C$
        \EndFor
     \State $M\gets$ \textbf{await} Measurement results
     \State $cost \gets Cost(M, statesTransmitted)$
     \State \textbf{transmit} $cost$ classically to Receiver
     \State $\theta^{i+1} \gets Update(\theta^i, cost)$; $i\gets i+1$
     \EndWhile
    \State Terminate protocol
\end{algorithmic}
\textbf{Receiver}
\begin{algorithmic}[1]
     \State $i \gets 0$; $\phi^i\gets$ Initialize
     \While{protocol not terminated}
     \State $M\gets [\hspace{1mm}]$
     \For{$t \in \{0,..., N\}$}
     \State $ \tilde{\rho} \gets$ \textbf{await} Quantum channel input
     \State $m \gets$  \textbf{measure} $D_{\phi^i}(\tilde{\rho})$ 
     \State \textbf{append} $m$ to $M$ with basis information
     \EndFor
     \State \textbf{transmit} $M$ classically to sender
     \State $cost \gets$ \textbf{await} Cost value decreased
     \State $\phi^{i+1}\gets Update(\phi^i, cost)$; $i\gets i+1$
     \EndWhile
\end{algorithmic}
\label{proto:blind-calibration}
\end{protocol}
\vspace{-8mm}
\end{figure}

\begin{theorem}
    Given a state set $S_p = \{\rho_1, ..., \rho_i\} \subseteq \mathcal{S}(\mathcal{H})$, where $\dim(\mathcal{H}) = n$,  and an $\epsilon>0$, there exists calibration set $S \supseteq S_p$, and $N$, the number of transmissions, such that, using Protocol \ref{proto:blind-calibration}, a receiver following the protocol using quantum state tomography can only recover a state set $\tilde{S}_p$ such that for each $\tilde{\rho} \in \tilde{S}_p$, $\norm{I/n - \tilde{\rho}} \leq \epsilon$.
\end{theorem}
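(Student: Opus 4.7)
The plan is to exploit the sender's uniform randomization so that, from the receiver's vantage point, every transmission is statistically indistinguishable from a fresh copy of the maximally mixed state, and then to invoke a standard tomography sample-complexity bound.

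First I would construct the calibration set. Given $S_p$, I enlarge it to some $S \supseteq S_p$ chosen so that
\[
\bar{\rho} \;:=\; \frac{1}{|S|}\sum_{\rho \in S}\rho \;=\; \frac{I}{n}.
\]
Such a completion always exists: taking $|S| \geq n\,|S_p|$ makes $|S|\,I/n - \sum_{\rho \in S_p}\rho$ a positive operator of trace $|S|-|S_p|$, which can be written as a sum of $|S|-|S_p|$ density matrices to adjoin to $S_p$. A more economical option is to embed $S_p$ in a state $1$-design, e.g.\ the eigenbasis of a full set of mutually unbiased bases, whose uniform mixture is exactly $I/n$.

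Second, I would show that the receiver's per-round marginal is exactly $I/n$. In Protocol \ref{proto:blind-calibration} the index $j$ of each transmitted $\rho \in S$ is drawn uniformly and kept private, so the receiver's prior over the transmitted state is $\bar{\rho}$. Under a well-calibrated (and in particular unital) encoder/channel/decoder composition, the state immediately before measurement is therefore $I/n$ on every round, and the $N$ received signals are i.i.d.\ copies of $I/n$ from the receiver's viewpoint. Crucially, no matter how the receiver partitions or relabels the $N$ signals a posteriori, each subset still carries the same marginal, because the hidden index is independent across rounds.

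Third I would invoke a finite-sample tomography bound on equation \eqref{eq:tomo_multi}: applying Hoeffding's inequality to each of the Pauli expectation estimators yields, for any $\epsilon>0$ and any target confidence, an explicit $N_0(\epsilon,n)$ such that every reconstruction $\tilde{\rho}$ the receiver forms from $\geq N_0$ transmissions satisfies $\norm{I/n - \tilde{\rho}} \leq \epsilon$. Choosing $N \geq N_0$ in the protocol then forces every $\tilde{\rho} \in \tilde{S}_p$ that the receiver could extract to be $\epsilon$-close to the maximally mixed state. The main obstacle I anticipate is step one, namely exhibiting a clean, minimal completion of $S_p$ whose uniform mixture is exactly $I/n$ while remaining a useful calibration set, together with justifying the unital-map assumption that pins the aggregate statistic to $I/n$ rather than to some nearby state; once that is settled, steps two and three are standard.
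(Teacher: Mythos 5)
Your proposal is correct and rests on the same core idea as the paper's proof: enlarge $S_p$ to a calibration set $S$ whose uniform mixture is the maximally mixed state, note that the receiver's per-round marginal is then $I/n$, and conclude via a large-$N$ concentration argument applied to the Pauli-expectation estimators in \eqref{eq:tomo_multi}. The difference lies in how the completion is built and how carefully the finite-sample step is handled. The paper pairs each $\rho_j$ with an ``opposite'' state $\overline{\rho_j}=U(\vec{\pi})\rho_j U^\dagger(\vec{\pi})$ so that each pair averages to the identity, and then appeals informally to the law of large numbers to get within $\epsilon/n$ of uniform outcome statistics in every basis. Your completion is more general: the positive-operator argument (take $|S|$ large enough that $|S|\,I/n-\sum_{\rho\in S_p}\rho\geq 0$ and split the remainder into density matrices) or the embedding into a state $1$-design works in any dimension, whereas pairing a state with a single rotated partner so that the pair averages exactly to $I/n$ only succeeds when the largest eigenvalue of $\rho_j$ is at most $2/n$ (automatic for qubits, not for pure states with $n>2$), so your route actually patches a dimensional limitation of the paper's construction. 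Your Hoeffding-based choice of $N_0(\epsilon,n)$ is also a sharper and more honest version of the paper's ``such an $N$ can always be found'' step. Finally, you make explicit the unitality/calibration assumption on the encoder--channel--decoder composition that pins the aggregate state to $I/n$; the paper leaves this implicit by assuming the measured outcome statistics directly reflect the transmitted mixture, so flagging it is a fair caveat rather than a gap in your argument relative to theirs.
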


\begin{proof}
    Let $S_p = \{\rho_1, ..., \rho_k\} \subseteq \mathcal{S(H)}$ be a state set and $\epsilon>0$ be given. We can construct $S_p$ in the following way. For each state $\rho_j \in S_p$, $S$ will contain both $\rho_j$ and $\overline{\rho_j} = U(\Vec{\pi})\rho_jU^\dagger(\Vec{\pi})$, that is, the state from the state set and a state with the exact opposite behavior, where $U$ is a rotation operator in $n$ dimensions. Let $p$ be the uniform distribution over $S$. For each transmission, the sender sends a state $\rho_t \in S$ with probability $p(\rho_t)=1/|S|$. For state tomography, the values $\Tr\left(\bigotimes_{i=1}^n\sigma_{v_i} \rho\right)$ are estimated via,
    \begin{align}
        \frac{1}{N}\sum_{i=1}^N m_i = \frac{1}{N}\sum_{i=1}^n n_i m_i \overset{\makebox[0pt]{\mbox{\normalfont\footnotesize\sffamily $\epsilon/n$}}}{\approx}
        \frac{1}{n}\sum_{i=1}^n m_i,
    \end{align}
    where $n_i$ represents the number of times measurement outcome $m_i$ was observed. Since either a state or its opposite is transmitted with equal probability, in each basis, measuring one value is as probable as measuring another, and so with the use of the law of large numbers, each $n_i$ approaches $N/n$ for large values of $N$, where such an $N$ can always be found to be $\epsilon/n$ away from the true distribution. What results is therefore a near equal mixture of all measurement outcomes. If we assume the measurement outcomes $m_i$ are normalized to a $[-1, 1]$ line, the sum  \eqref{eq:tomo_multi} becomes $\Tr(\rho)I/n \pm \epsilon/nI =I/n \pm \epsilon/nI$.
\end{proof}

\section{System Model and Simulation Configuration}\label{sec:model}
In this section, we present a model for a communication system that performs Protocol \ref{proto:blind-calibration}. The components of the system are a quantum source, an encoding device, the noisy quantum channel, a decoding device, and a measurement device. To model these components, we use the following. For simplicity of analysis, we assume that the quantum source emits quantum states with perfect fidelity, and therefore there is no noise model associated with it. For the encoder and decoder, we assume that they are quantum channels parameterized by a vector of variables $\theta^i = (\theta^i_{0}, \theta^i_{1}, ..., \theta^i_{n})$ for the encoder and  $\phi^i = (\phi_{0}^i, \phi_{1}^i, ..., \phi_{m}^i)$ for the decoder, where the superscript $i$ represents the current iteration of the protocol. For the measurement device, we again assume for simplicity that measurement outcomes are accurate.

For the quantum channel, we consider noise models for coherent noise. Firstly we model random noise by applying small, random, unitary rotations, and secondly we consider length-dependent bit- and phase-flip errors. The noise models are applied via quantum channels such that each qubit in the system has an independent chance of experiencing the noise.

For the rotational noise model, we generate a random rotation vector $\mathbf{\theta}_R$ with elements from $-\pi$ to $\pi$. Then, depending on the length $L$ kilometers of the channel, we determine the percentage of the rotation applied via, $p = 1 - 10^{-\mu L/10}$, where $\mu$ is a parameter of the channel. For the length-dependent noise, for a channel of length $L$ and channel parameters $\mu_1, \mu_2 \geq 0$, the probability that a qubit arrives with a bit-flip error or a phase-flip error is $p_b = 1 - 10^{-\mu_1 L/10}$ and $p_p = 1 - 10^{-\mu_2 L/10}$ respectively. 

For the protocol, we need to select a cost function. Here any valid cost function can work, but in our simulations, we use the quantum fidelity or error rates. Fidelity between states is defined as follows. Assuming the input state is $\sigma \in \mathcal{S}(\mathcal{H})$ and the output state is $\rho\in \mathcal{S}(\mathcal{H})$, the state fidelity between the two is given by, $F(\rho,\sigma) \coloneqq \Tr\left(\sqrt{\sqrt{\rho}\sigma\sqrt{\rho}}\right)^2.$ To use fidelity as a cost, we define infidelity to be $1 - F$. To determine the fidelity in simulation, we implemented quantum state tomography \cite{nielsen_chuang_2010} as explained in the previous section. To compute the fidelity nonlocally in our case, the receiver needs to perform the measurements and the sender needs to perform the state reconstruction to compute the state fidelities of the calibration states. This process is much the same as the calibration method in \cite{xavier2009experimental}, it differs only in that the measurement outcomes are sent back to the sender for state reconstruction instead being  computed locally.

The error rates as a cost function are calculated by comparing the classical input to the measured output. For an input vector $in$ of classical messages and output $out$, the error rate is computed as, $err(in, out) \coloneqq 1 - \frac{1}{n}\sum_{i=1}^n \delta_i(in, out)$, where $\delta_i(a, b) = 1$ if $a_i = b_i$ and $0$ otherwise and $n=|in|=|out|$.

To numerically analyze Protocol~\ref{proto:blind-calibration}, we use a collection of simulated experiments developed using the quantum computing framework PennyLane \cite{bergholm2018pennylane}. We configure our simulations as follows. We begin by generating a parameterized quantum circuit in PennyLane that performs the encoding step, the quantum channel, the decoding step, and the measurements as in Fig.~\ref{fig:ent-swap-circuit}. For updating the encoder and decoder parameters, after each iteration, updated values are passed to the circuit. For the update function, we use an optimizer to minimize the cost function. The encoder and decoder parameters are then updated during protocol execution according to the optimizer outputs, and in the next iteration, noise compensation values are used. For noise parameters, depending on the length of the channel, we provide the circuit a set of parameters for the rotational and probabilistic bit- or phase-flip noise values. The parameters of the protocol and the set of calibration states are chosen and then simulation can run. 

\section{Protocol Performance Benchmarks}\label{sec:performance}
We benchmark the performance of Protocol~\ref{proto:blind-calibration} for various communication scenarios under various noise parameters. To simplify the simulation model, we assume that the sender simply prepares quantum sources but does not modify their encoding during the protocol. In our notation, this implies $E_{\theta^i}$ is the identity map for all iterations $i$. Indeed, with tuning ability, a sender can pre-calibrate their device locally, ensuring the states they intend to transmit are calibrated to their standard, and so generality is not lost with this assumption. 

\subsection{Random Single-Qubit States}

As an initial test of the protocol, we analyze the case where the sender randomly selects $n=5$ single-qubit states for transmission. During transmission, the receiver measures the incoming states in one of the three Pauli bases such that quantum state tomography can be performed at the sender's side. For a cost function, the sender computes an average state fidelity using the measurements from the receiver to compare against the originally chosen states.

For this test of the protocol, we fix the length of the fiber to 50 km and determine how many iterations of the protocol are needed for approximate convergence to determine a good parameter regime for a more in-depth study of the protocol that we conduct in the later sections. The number of states transmitted per iteration here is $N=15,000$. For this analysis, we use a simple rotational noise model which rotates qubits as defined above with $\mu=0.05$ dB/km. Bit- and phase-flip noise is also set to $\mu_1=\mu_2=0.05$ 
dB/km.  We then deterministically apply a Bloch sphere rotation $p\cdot \mathbf{\theta}_R$ to each qubit that travels through the channel. For flip errors, they are applied randomly with their respective probability. For decoding, we apply a counter rotational correction before a measurement is made. In Fig.~\ref{fig:fid-vs-iters}, we can see that with approximately 100 iterations of the protocol, the protocol has learned how to overcome the rotational noise effects in the channel. Based on these results, for the remainder of this work, we use $I_{\max}=250$ iterations for our simulations unless otherwise stated.

\begin{figure}
    \centering
    \includegraphics[]{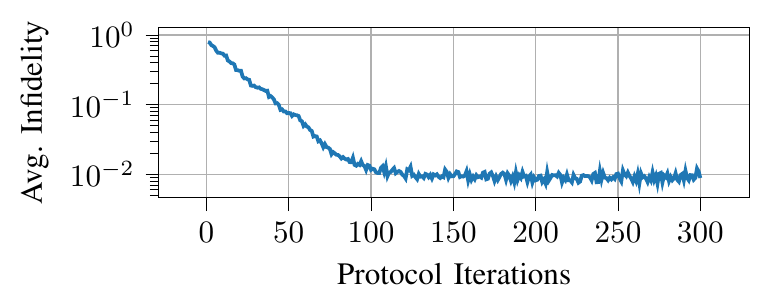}
    \caption{The average infidelity vs. number of protocol iterations for calibration using $n=5$ randomly selected states over 50 km fiber under deterministic rotational noise.\vspace{-3mm}}
    \label{fig:fid-vs-iters}
\end{figure}

\subsection{BB84 States}\label{sec:bb84}

The BB84 protocol uses states $\{\ket{0}, \ket{1}, \ket{+}, \ket{-}\}$ and transmits them with equal probability. A naive calibration process is to tell the receiver the basis in which the state is prepared and repeatedly transmit the state. The receiver can then tune the decoder according to a quantum bit error rate and this can be repeated for all states. The receiver then knows exactly which states are the protocol states and what the cost function is, and with this information, the receiver (or an eavesdropper imitating the receiver), can potentially predict the protocol the sender intends to perform. Using Protocol~\ref{proto:blind-calibration} instead, the sender need not disclose the calibration states nor the cost function, thereby masking the protocol parameters from the receiver, preventing the receiver from learning which protocol the sender intends to perform. 

The QKD results after performing Protocol~\ref{proto:blind-calibration} for a channel with only rotational noise are plotted in Fig.~\ref{fig:qber-length} (a). We can see that as the fiber distance increases, the QBER will tend to 0.6 in the pre-calibrated stage (blue). Using Protocol~\ref{proto:blind-calibration} to calibrate the system, the trained decoder reduces the BB84 QBER in the system within 250 iterations of training with 1,000 (lossless) transmissions per iteration. In Fig.~\ref{fig:qber-length} (b) is the performance over a channel that has both rotational noise and probabilistic bit- and phase-flip error. What we can see is that until roughly 60 km, calibration makes no difference to accommodate the noise in the channel, since the probability of the error applying is smaller than 0.5. Once the flipping probability exceeds 0.5, the QBER can be minimized by applying the inverse rotations deterministically, which is observed in the bifurcation of the trends at around 60 km. 

\begin{figure}
    \centering
    \includegraphics[]{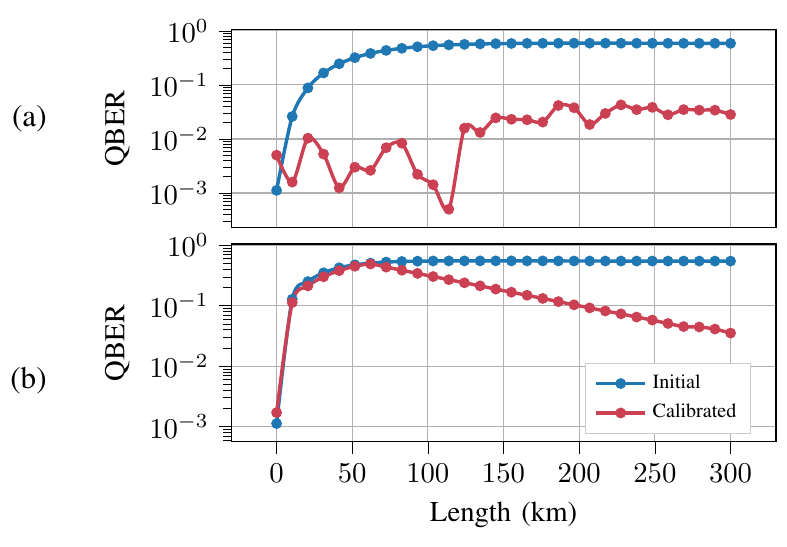}
    \caption{Calibration result for mitigating (a) rotational noise, and (b) rotation, bit- and phase-flip noise optimized with $\epsilon_\text{th} = 10^{-7}$. Plotted is the average Quantum Bit Error Rate (QBER) for 350 trials of performing the BB84 protocol with 1,000 qubits sent per trial.\vspace{-4mm}}
    \label{fig:qber-length}
\end{figure}

We test how many transmissions are required to minimize the QBER within 20 iterations. The number of transmissions build up accuracy in the statistics regarding the noise properties of the channel. The results appear in Fig.~\ref{fig:shots-qber}. In the case of a channel with both rotation noise and bit- and phase-flip error with a fixed channel length of $L=120$ km, at approximately $N=2,000$, further  transmissions do not improve the calibration significantly. Here, we do not consider loss in the system, so in this case we can consider $N$ to be the number of single qubit detections. In a channel with attenuation $\eta\in (0, 1]$, we can expect the number of transmissions to increase by a factor of $\eta^{-1}$. 

\begin{figure}
    \centering
    \includegraphics[]{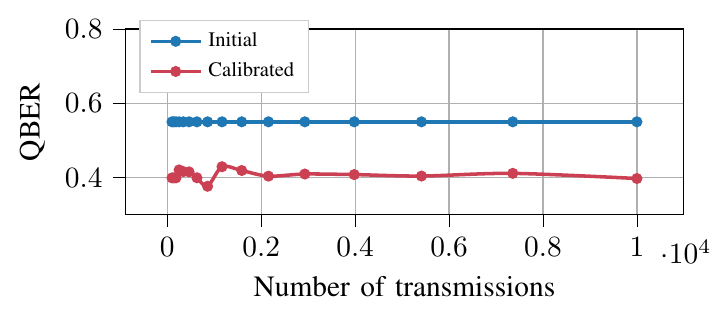}    
    \caption{The calibration accuracy using a fixed $I_{\max}=20$ iterations over a noisy channel with both bit- and phase-flip and rotational error.}
    \label{fig:shots-qber}
\end{figure}

\subsection{Entanglement-Swapping}

In quantum networks, an important protocol for extending the range of quantum state transmission is entanglement swapping~\cite{zukowski1993event}. To perform entanglement swapping, three (or more) parties, a sender, a receiver, and a midpoint receiver need to coordinate to perform the protocol. One approach to this protocol is that the sender and receiver each locally generate an EPR pair and send half of the EPR pair to the midpoint node. The midpoint node performs a Bell State Measurement (BSM) and at the end of the protocol, the sender and receiver share an EPR pair. This configuration is also seen in QKD protocols such as Measurement Device Independent QKD (MDI-QKD)~\cite{lo2012measurement}, making this use-case a further practical scenario. 

\begin{figure}
    \centering
    \includegraphics[]{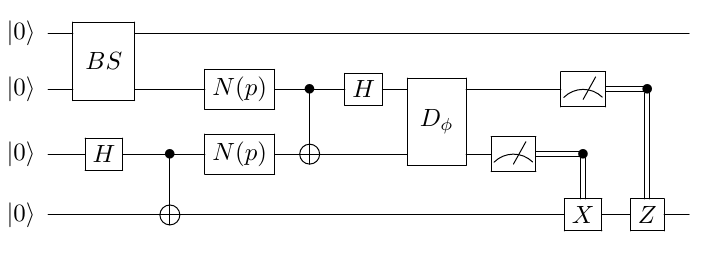}
    \caption{Entanglement swap circuit. The $BS$ gate prepares one of four Bell states. $N(p)$ gates apply noise. \vspace{-4mm}}
    \label{fig:ent-swap-circuit}
\end{figure}

To implement the protocol, we use the following approach based on  the circuit in Fig.~\ref{fig:ent-swap-circuit}. We assume that the both the sender and receiver transmit half of a Bell pair to a central node, and the sender selects a Bell state at random. The central node performs a correction to overcome channel noise before measuring, but is unaware of which Bell state is being distributed. The central node performs a Bell state measurement on the incoming qubits. To teleport the quantum state, two classical bits used for correction are sent to one party. After this, the sender and receiver parties measure their remaining half and can coordinate with each other to generate a cost function, using the error cost defined previously for example. where the input is Bell states that were transmitted, and the output is the resultant measured Bell pair.

We perform Protocol~\ref{proto:blind-calibration} over two error models. In Fig.~\ref{fig:ent-swap} (a), we analyze the effect of arbitrary rotation. In the Fig.~\ref{fig:ent-swap} (b), we include bit- and phase- flip error. With deterministic errors, as are the arbitrary rotations, it is possible to correct for error. For probabilistic error, we see that until the probability is high enough for error, no correction using rotational correction can be effectively applied. Once it is the case, we see the calibrated error rates diverge from the uncorrected error rates, as described in the QKD example. Here again we do not simulate loss, and a scaling factor should be applied to accommodate loss as explained in Section~\ref{sec:bb84} for each channel and for detector flaws. 

\begin{figure}
    \centering
    \includegraphics[]{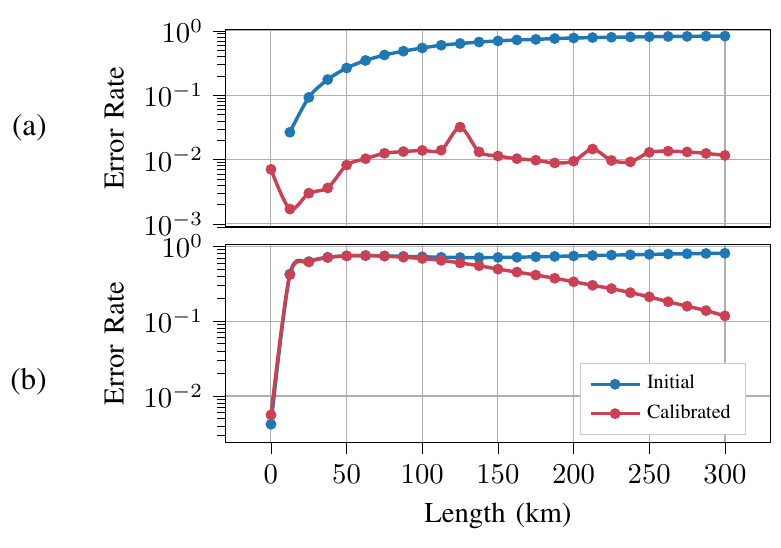}
    \caption{Calibration performance for entanglement swapping under (a) rotational noise and (b) rotational and bit-flip noise optimized with a cost-change tolerance of $\epsilon_\text{th} = 10^{-7}$. Plotted is the average error rate for transmitting performing Bell state transmissions. \vspace{-2mm}}
    \label{fig:ent-swap}
\end{figure}

\vspace{-3mm}

\subsection{Multipartite Entanglement Distribution}

In this section, we simulate a communication setting for transmitting multipartite entanglement, specifically GHZ and W states, over channels with rotational noise. GHZ states and W states are defined as, $\ket{GHZ} \coloneqq  \frac{1}{\sqrt2}(\ket{00...0} + \ket{11...1})$ and $\ket{W} \coloneqq \frac{1 }{\sqrt{n}}(\ket{100...0} + \ket{010...0} + ... + \ket{00...01})$. GHZ states and W states are important for protocols like quantum secret sharing \cite{hillery1999quantum} and others. Here we analyze a setting where a sender prepares $n$ qubit GHZ or W states and transmits them over a noisy channel. 

In Fig.~\ref{fig:multi-partitle}, we show the calibration effects for correcting the channel noise for both a varying number of qubits in the entangled quantum systems and length of the channel. With a growing number of qubits in the system, there is a higher rate of infidelity, but as with the previous results, we see that the protocol can robustly correct deterministic errors in the channel. Here we have used the state infidelity as the cost, approximating the output state density matrix using multi-qubit state tomography in Eq.~\eqref{eq:tomo_multi}. For better performance of the simulation, we use $\epsilon_{\text{th}}=10^{-5}$ and 20,000 channel transmissions per protocol iteration. Here we do not simulate probabilistic noise, as simulation performance rapidly deteriorates with larger entangled states, but given the noise models are the same as the previous cases, we suspect a similar trend would emerge, where until roughly $L=60$ km no corrections can be made. 

\begin{figure}
    \centering
    \includegraphics[]{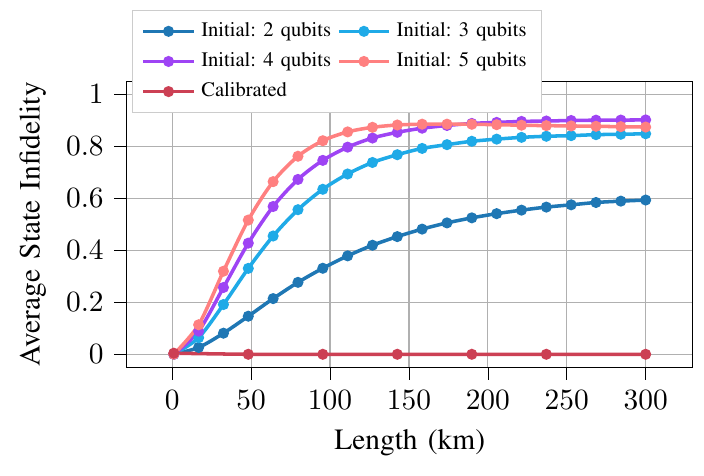}
    \caption{Calibration performance for transmitting $n=2,3,4,5$ qubit GHZ and W states over a channel with rotational noise. Here we optimize with a cost-change tolerance of $\epsilon_\text{th} = 10^{-5}$, and use 20,000 transmissions per iteration. The calibrated trend is for all $n$ qubit cases. \vspace{-3mm}}
    \label{fig:multi-partitle}
\end{figure}

\vspace{-3mm}

\section{Conclusion}\label{sec:conclusion}

In conclusion, we show that masking the calibration states and cost while still allowing the calibration to take place is possible. To do this, our Blind Calibration protocol is used, preventing any other party from performing state tomography to learn the calibration state set. By performing the error cost calculation at the sender's side, there is further masking of the calibration protocol, minimizing any leakage of the calibration protocol. We show that, under various communication settings, the protocol works well even under noisy channel conditions. In our analysis, we used simple decoding strategies to overcome noise, but more complex strategies can directly fit with the protocol framework. The protocol we propose is general and accommodates any quantum communication setting where calibration is required making it practical and timely during the growing interest in quantum communication and quantum networks.

\vspace{-3mm}

\bibliographystyle{IEEEtran}

\end{document}